\def\*#1{\mathbf{#1}}
\def\~#1{\boldsymbol{#1}}
\theoremstyle{plain}
\newtheorem{theorem}{Theorem}[section]
\newtheorem{proposition}[theorem]{Proposition}
\theoremstyle{definition}
\newtheorem{definition}{Definition}
\theoremstyle{remark}
\newtheorem{remark}{Remark}
\newtheorem{example}[theorem]{Example}
\begin{document}

%\jvol{00} \jnum{00} \jyear{2014} \jmonth{October}

\title{Explaining Risks: Axiomatic Risk Attributions for Financial Models}

\author{Dangxing Chen$^*$\thanks{$^\ast$Corresponding author.
Email: dangxing.chen@gmail.com} \\
\affil{Duke Kunshan University, China} 
% \received{v2.1 released October 2014} 
}

\maketitle

\begin{abstract}
In recent years, machine learning models have achieved great success at the expense of highly complex black-box structures. By using axiomatic attribution methods, we can fairly allocate the contributions of each feature, thus allowing us to interpret the model predictions. In high-risk sectors such as finance, risk is just as important as mean predictions.  Throughout this work, we address the following risk attribution problem: how to fairly allocate the risk given a model with data?  We demonstrate with analysis and empirical examples that risk can be well allocated by extending the Shapley value framework.
\end{abstract}

\begin{keywords}
Explainable ML; Attribution methods; Shapley value; Risk
\end{keywords}

\begin{classcode}C58, C71\end{classcode}

\section{Introduction}

While machine learning (ML) models have significantly improved accuracy over traditional models, their black-box nature makes them difficult to interpret. Over the past few years, extensive research has been conducted on the interpretation. Particularly, axiomatic approaches have been highly successful in solving attribution problems  \citep{lundstrom2023four,lundstrom2023unifying,lundstrom2022rigorous,lundberg2017unified,sundararajan2017axiomatic}. These methods fairly allocate contributions to features by preserving desired axioms and therefore allow researchers to interpret the black-box structure of ML models.

While there has been great success in the interpretation of models, past research has concentrated on the baseline attribution methods (BAMs) for mean predictions. In other words, given a specific prediction, we determine the contribution of each feature. When it comes to finance, such a high-stakes sector, mean predictions are not the only matter to be concerned with; risk is also crucial. Lack of proper risk management could have catastrophic consequences, as was demonstrated by the recent collapse of Silicon Valley Bank. As a result, risks are considered along with returns in many financial studies, such as modern portfolio theory \citep{ding2022drawdown,xidonas2020robust} and factor models \citep{ang2020using,bai2002determining,bai2003inferential}. 

In this work, we address the following \textbf{Risk Attribution Methods (RAMs)}: how can risk be fairly distributed given a model based on data? As a simple example, consider a portfolio $Y_t = X_{1,t} + X_{2,t}$ as the daily return of the portfolio with two assets. At each date, $Y_t$ provides only a snapshot of the portfolio's performance and does not reflect the portfolio's vulnerability to downside risks. While the existing BAMs could be applied to $Y_t$ each day to provide an explanation of how each asset contributes to the calculation as $X_{i,t}$, the randomness of $X_{i,t}$ over time is not considered and therefore the risk is neglected. The bank wishes to determine the risk by investigating the potential loss under different scenarios, especially during critical times such as financial crises. In practice, as an example, the bank might look at the portfolio's past year performance or simulations to measure the risk. The portfolio may need to be adjusted if the risk is too high. As a result, the focus will be on the risk.

This paper extends Shapley values to solve RAMs by carefully defining characteristic functions and investigating theoretical properties. We analyze and provide numerous analytical and empirical examples to demonstrate that by using the Shapley value, it is possible to fairly allocate risk and provide useful interpretations. Our main contribution can be summarized as follows. 
\begin{enumerate}
    \item We propose the framework of risk attribution problems for ML models. 
    \item We extend the Shapley value to RAMs and compare it with BAMs. 
\end{enumerate}

\paragraph{Related Literature}
In recent years, there have been extensive studies on the axiomatic approach to attribution problems in the ML community \citep{lundstrom2023unifying,lundstrom2022rigorous,lundberg2017unified,sundararajan2017axiomatic}. Although these studies have provided much insight into the properties of attribution methods for mean predictions, they are not necessarily applicable to RAMs. RAMs differ from BAMs primarily due to the presence of risk measures.
% \begin{enumerate}
%     \item Calculations are further affected by risk measures. 
%     \item Distributions of random variables as inputs heavily affect RAMs. 
% \end{enumerate}
In another direction, finance researchers began to investigate RAMs \citep{tarashev2016risk,shalit2021shapley,hagan2023portfolio}. However, these studies have been restricted to specific financial models (e.g., bank's systematic risk and portfolio theory) and sometimes to specific risk measures. \textbf{To the best of our knowledge, this is the first work to study RAMs for general problems.}

The remainder of the paper is organized as follows. The Shapley values and axioms for BAMs are reviewed in Section 2. We introduce RAMs and discuss how to extend the Shapley value for RAMs in Section 3. We compare RAMs with BAMs in Section 4 in terms of axioms, highlighting what we expect or not expect in the context of RAMs. In Section 5, we provide empirical examples. Section 6 concludes.

\section{Baseline attribution methods and Shapley value}
\label{prerequisites}

We are interested in the problem of the form,
\begin{align} \label{eq:framework}
    Y = f(\*X) + \epsilon,
\end{align}
where $\*X$ is a $m$-dimensional random variable, $\epsilon$ is independent from $\*X$, and ML models are applied to approximate $f$. For simplicity, we assume $\mathcal{F}$ to be the set of real analytic functions and $f \in \mathcal{F}$. 

% One example in finance is the factor model \cite{ang2020using,bai2003inferential,bai2002determining}, where $Y$ is the log return of a stock price, and $\*X$ are common factors such as industry factors. Different factor models share the same features 
% \begin{align*}
%     Y^j &= f^j(\*X) + \epsilon^j.
%     % Y^2 &= f^2(\*X) + \epsilon^2, \\
%     % & \vdots
% \end{align*}
% Therefore, it is also worthwhile to compare the behavior of models for different assets.

\subsection{Baseline attribution methods} \label{sec:BAM}

Baseline attribution methods (BAMs) involve assigning a model's prediction score to a specific input. Following \cite{lundstrom2023four}, throughout the paper, $\*x$ represents a general function input, $\overline{\*x}$ represents the input to explain, which we call the explicand, and $\*x'$ denotes a reference baseline. Without loss of generality (WLOG), we assume $\overline{\*x}>\*x'$ and $\*x'=\*0$. Then we have the following definition.

\begin{definition}[\textbf{Baseline Attribution Method (BAM)}] \label{def:BAM}
    Given an explicand $\overline{\*x}$, baseline $\*x'$, and a function $f$, a baseline attribution problem is any function of the vector form $\~{\mathcal{A}} (\overline{\*x},\*x',f) \in \mathbb{R}^m$. We use $\mathcal{A}_i(\overline{\*x},\*x',f)$ and sometimes $\mathcal{A}_i(f)$ or $\mathcal{A}_i$ to denote the $i$-th attribution of $\~{\mathcal{A}} (\overline{\*x},\*x',f)$.
\end{definition}

The following fundamental axioms are usually required for BAMs, as discussed in  \cite{sundararajan2020many}. As they apply to general problems, we call them fundamental axioms.
% These axioms are differentiated for different problems by adding their names in parentheses.     

\begin{definition}\label{def:basic_axiom_BAM}
\textbf{Fundamental axioms for BAMs}.

$\bullet$ \textbf{Completeness (BAM): }
    $\sum_{i=1}^m \mathcal{A}_i = f(\overline{\*x}) - f(\*x').$
    
$\bullet$ \textbf{Linearity (BAM): }
    For $\alpha, \beta \in \mathbb{R}$, functions $f,g$, $\mathcal{A}_i(\alpha f + \beta g) = \alpha \mathcal{A}_i(f) + \beta \mathcal{A}_i (g)$.
    
$\bullet$ \textbf{Dummy (BAM):}
    If $\frac{\partial f(\*x)}{\partial x_i} = 0$, then $\mathcal{A}_i = 0$.
    
$\bullet$ \textbf{Symmetry (BAM):}
    Define $\*x^*$ by swapping $x_i$ and $x_j$. Suppose $f(\*x) = f(\*x^*)$, $\forall \*x$. Then if $\overline{x}_i = \overline{x}_j$ and $x_i' = x_j'$, then $\mathcal{A}_i = \mathcal{A}_j$.
\end{definition}

\subsection{Shapley values} \label{sec:Shap}

% \begin{definition}[Baseline Attribution Method (BAM)]
%     Given $\*x, \*x' \in [\*a,\*b]$, $f \in \mathcal{F}(\*a,\*b)$, a baseline attribution method is any function of the form $\*A: [\*a,\*b] \times [\*a,\*b] \times \mathcal{F}(\*a,\*b) \rightarrow \mathbb{R}^m$, where we denote $A_i$ as the $i$th attribution of $\*A$ for simplicity. 
% \end{definition}

The Shapley value, introduced by \cite{shapley1953value}, takes as input a characteristic function $v:2^M \rightarrow \mathbb{R}$ with $M = \{1, \dots, m\}$ and $v(\emptyset) = 0$. The Shapley value produces attributions $\text{SH}_i(v)$ for each player $i \in M$,
\begin{align} \label{eq:Shapley}
    \text{SH}_i(v) = \sum_{S \subseteq M \backslash i} \frac{|S|! (|M|-|S|-1)!}{|M|!} (v(S \cup i) - v(S)).
\end{align}
Shapley values are uniquely determined by following fundamental axioms.  

\begin{definition}
We call the following axioms the \textbf{fundamental axioms for Shapley value}. 

 $\bullet$ \textbf{Completeness (SH):}
        $\sum_{i=1}^m \text{SH}_i(v) = v(M) - v(\emptyset)$.

$\bullet$ \textbf{Linearity (SH):}
For $\alpha, \beta \in \mathbb{R}$ and characteristic functions $v$ and $w$, 
$
     \text{SH}_i(\alpha v + \beta w) = \alpha \text{SH}_i(v) + \beta \text{SH}_i (w).
$

$\bullet$ \textbf{Dummy (SH):}
If 
$    
v(S \cup i) - v(S) = 0, \forall S \subseteq M \backslash i,
$
 then $\text{SH}_i(v) = 0$.

$\bullet$ \textbf{Symmetry (SH):}
If 
$
     v(S \cup i) = v(S \cup j), \forall S \subseteq M \backslash \{i,j\},
 $    
 then $\text{SH}_i(v) = \text{SH}_j(v)$.
 \end{definition}

% \begin{theorem}[\cite{shapley1953value}] \label{thm:Shap_uniq}
%     Shapley value is a unique mapping that preserves Completeness (SH), Linearity (SH), Dummy (SH), and Symmetry (SH) for the game $(M,v)$. 
% \end{theorem}

For BAMs, we focus on the Baseline Shapley value (BShap), introduced by \cite{sundararajan2020many}, which calculates
\begin{align*}
    v_B(\overline{\*x},\*x',f;S) = f(\overline{\*x}_S; \*x'_{M \backslash S}).
\end{align*}
That is, baseline values replace the feature's absence.  For example, suppose $f(x_1,x_2) = x_1+x_2$, $\overline{\*x} = (\overline{x}_1, \overline{x}_2)$, $\*x' = (0,0)$, and $S=\{1\}$, then we have $v_B(\overline{\*x},\*x',f;S) = f(\overline{x}_1,0)$. We use $v_B(\overline{\*x},\*x',f;S)$ to denote the characteristic function for BAMs and denote the method as BShap (BAM). We often use $v_B(S)$ in short form.  We denote the $i$-th attribution of BShap (BAM) by $\text{BS}_i(v_B)$. We focus on the BShap since it has better theoretical properties than other variants of Shapley value in terms of preserving axioms, as discussed in \cite{sundararajan2020many}. As a result, we are more confident about using it for sectors with high stakes. Furthermore, it is verified that BShap (BAM) preserves Completeness (BAM), Linearity (BAM), Dummy (BAM), and Symmetry (BAM).

\paragraph{Monotonic axioms} Fundamental axioms focus on general problems without any prior knowledge. Finance, however, has benefited from many domain knowledge, and monotonicity is one of the most important. Examples include option pricing \citep{shreve2004stochastic,dugas2009incorporating} and credit scoring \citep{chen2023address,chen2022monotonic,repetto2022multicriteria}. As a result, many studies have examined monotonicity axioms in the content of BAMs \citep{chen2024attribution,friedman1999three,lundstrom2023four}. We discuss how monotonicity is reflected in Shapley value.

\begin{definition}\label{def:mono_axiom_Sh}
\textbf{Monotonic axioms for Shapley value}. 
\begin{itemize}
\item \textbf{Individual Monotonicity (SH):}
If
$
 v(S) \leq v(S \cup i),  \forall S \subseteq M \backslash i,
$
then 
$
\text{SH}_i \geq 0.
$
\item \textbf{Pairwise Monotonicity (SH):}
If 
$
v(S \cup i) \leq v(S \cup j), \forall S \subseteq M \backslash \{i,j\},
$
then
$
\text{SH}_i \leq \text{SH}_j.
$
\item \textbf{Symmetric Monotonicity (SH):}
If
$
    w(S \cup i)-w(S)  \leq v(S \cup j) - v(S), \forall S \subseteq M \backslash \{i,j\},
$
then
$
        \text{SH}_i(w) \leq \text{SH}_j(v).
$
\end{itemize}
\end{definition}

Results for Shapley value for monotonicity have been studied by  \cite{casajus2018sign,chen2024attribution} with the following conclusion. 

 \begin{theorem}\label{thm:BShap_mono_global}
    BShap preserves Individual Monotonicity (SH), Pairwise Monotonicity (SH), and Symmetric Monotonicity (SH) only for $i = j$.
\end{theorem}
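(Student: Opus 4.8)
The plan is to treat the three monotonicity properties separately, in each case working directly from the explicit Shapley formula \eqref{eq:Shapley}, whose weights $c_s := s!(|M|-s-1)!/|M|!$ (viewed as a function of the coalition size $s=|S|$) are all nonnegative. For Individual Monotonicity the argument is immediate: under the hypothesis every marginal contribution $v(S\cup i)-v(S)$ is nonnegative, so $\text{SH}_i(v)$ is a nonnegative combination of nonnegative terms and hence $\text{SH}_i(v)\ge 0$.

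For Pairwise Monotonicity I would compute the difference $\text{SH}_j(v)-\text{SH}_i(v)$ and show it is nonnegative. The key step is to reindex both Shapley sums over $T\subseteq M\backslash\{i,j\}$ by splitting each coalition $S$ according to whether it omits or contains the other index (writing $S=T$ or $S=T\cup j$ in $\text{SH}_i$, and $S=T$ or $S=T\cup i$ in $\text{SH}_j$). After collecting terms one is left with
\begin{align*}
    \text{SH}_j(v)-\text{SH}_i(v)=\sum_{T\subseteq M\backslash\{i,j\}}\bigl(c_{|T|}+c_{|T|+1}\bigr)\bigl(v(T\cup j)-v(T\cup i)\bigr).
\end{align*}
Since each bracketed weight is nonnegative and the hypothesis gives $v(T\cup j)\ge v(T\cup i)$ for every such $T$, the whole sum is nonnegative, i.e.\ $\text{SH}_i\le\text{SH}_j$.

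For Symmetric Monotonicity the case $i=j$ is again handled term by term: the hypothesis reads $w(S\cup i)-w(S)\le v(S\cup i)-v(S)$ for all $S\subseteq M\backslash i$, so multiplying by $c_{|S|}\ge0$ and summing yields $\text{SH}_i(w)\le\text{SH}_i(v)$. The substantive part is to show the property \emph{fails} when $i\ne j$, which I would do with an explicit counterexample. The point to exploit is that when $i\ne j$ the hypothesis only constrains marginals over coalitions $S\subseteq M\backslash\{i,j\}$, whereas $\text{SH}_i(w)$ and $\text{SH}_j(v)$ also aggregate marginals over coalitions containing $j$ and $i$ respectively, which are left entirely free. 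I would take the minimal case $m=2$, $M=\{1,2\}$, $i=1$, $j=2$, for which $M\backslash\{1,2\}=\emptyset$ and the hypothesis collapses to the single inequality $w(\{1\})\le v(\{2\})$; choosing $w$ and $v$ so that this holds (for instance with both sides equal to $0$) while making $w(\{1,2\})-w(\{2\})$ large and positive but $v(\{1,2\})-v(\{1\})$ negative forces $\text{SH}_1(w)>\text{SH}_2(v)$, violating the conclusion.

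The routine part is verifying Individual Monotonicity and the $i=j$ case of Symmetric Monotonicity, which reduce to nonnegativity of the weights. I expect the main obstacle to be the bookkeeping in Pairwise Monotonicity---ensuring the two families of weights combine into the stated nonnegative coefficients after the split---together with selecting clean numerical values in the Symmetric Monotonicity counterexample so that the hypothesis holds while the conclusion is violated.
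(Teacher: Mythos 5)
Your three positive arguments are correct. The Pairwise Monotonicity identity
\begin{align*}
\text{SH}_j(v)-\text{SH}_i(v)=\sum_{T\subseteq M\backslash\{i,j\}}\bigl(c_{|T|}+c_{|T|+1}\bigr)\bigl(v(T\cup j)-v(T\cup i)\bigr)
\end{align*}
does hold (the $v(T\cup i\cup j)$ terms cancel when the two split sums are paired), and Individual Monotonicity and the $i=j$ case of Symmetric Monotonicity reduce, as you say, to nonnegativity of the Shapley weights. Arguing at the level of arbitrary characteristic functions is harmless for these positive halves, since BShap games are a special case. Note that the paper itself supplies no proof of this theorem; it imports it from the literature \citep{casajus2018sign,chen2024attribution}. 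Its closest in-text argument is Example~\ref{eg:SM_incompatible}, which obtains the $i\neq j$ failure axiomatically (Symmetric Monotonicity for $i\neq j$ is incompatible with Dummy and Completeness), so your direct-counterexample route is a legitimate alternative to what the paper gestures at.

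There is, however, a genuine gap in your negative part. The theorem is a statement about BShap, not about the Shapley operator on arbitrary games: to show BShap fails Symmetric Monotonicity (SH) for $i\neq j$, the violating pair $(w,v)$ must arise from the BShap construction $v(S)=f(\overline{\*x}_S;\*x'_{M\backslash S})$ (or its RAM analogue) for actual functions. Your $w$ and $v$ are abstract games with freely chosen values, which only shows the abstract Shapley value violates the implication. The fix is short because any two-player game with $v(\emptyset)=0$ is BShap-realizable: with baseline $\*x'=(0,0)$ and explicand $\overline{\*x}=(1,1)$, take $g(x_1,x_2)=x_1x_2$, giving $w(\{1\})=w(\{2\})=0$ and $w(\{1,2\})=1$, and $f(x_1,x_2)=x_1(1-x_2)$, giving $v(\{1\})=1$, $v(\{2\})=0$, $v(\{1,2\})=0$. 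The hypothesis $w(\{1\})-w(\emptyset)\le v(\{2\})-v(\emptyset)$ holds with both sides zero, yet $\text{SH}_1(w)=1/2>-1/2=\text{SH}_2(v)$. Either insert such an explicit realization, or prove the general realizability claim (e.g., by multilinear interpolation through the four prescribed values); without one of these steps the ``only for $i=j$'' half is not established for BShap.
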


% \subsection{Risk measure}
% Suppose that there is a random variable $Y$ defined on an appropriate probability space $\mathcal{L}$. The risk associated with a random variable is assessed using risk measures. 

% \begin{definition}[Risk Measure]
%     For a random variable $Y \in \mathcal{L}$, a risk measure is any function of the form $\varrho: \mathcal{L} \rightarrow \mathbb{R} \cup \{+\infty\}$. 
% \end{definition}

% There are many risk measures. Popular choices in the industry include Variance and Standard Deviation (STD). 
% However, there have been criticisms of Variance/STD in that they count both positive and negative returns equally. A different approach would be to emphasize the downside risk through (Conditional) Value-at-Risk. 

% \begin{definition}[(Conditional) Value-at-Risk ((C)VaR)]
% For a random variable $Y \in \mathcal{L}$, at level $\alpha \in (0,1)$, the Value-at-Risk is defined as
% \begin{align*}
%     \text{VaR}_{\alpha}(Y) 
%     = F^{-1}_{-Y}(1-\alpha),
% \end{align*}
% where $F$ is the cumulative distribution function; the Conditional Value-at-Risk is defined as
% \begin{align*}
%     \text{CVaR}_{\alpha}(Y) = \frac{1}{\alpha} \int_0^{\alpha} \text{VaR}_t(Y) \ dt.
% \end{align*}
% \end{definition}

\section{Risk attribution methods and Shapley value}

 Previous efforts have primarily focused on BAMs, i.e. assigning attributions to a fixed explicand $\overline{\*x}$. In finance, risk is also critical. In the framework of Equation~\ref{eq:framework}, we want to figure out what the associated risks are for each random variable $X_i$. 
In light of this, we provide the following definition. 
\begin{definition}[\textbf{Risk Attribution Method (RAM)}]
    Given random variables $\*X, Y$, a function $f$, a baseline constant $\*x'$, and a risk measure $\varrho$, a risk attribution method (RAM) is any function of the form $\*A(\*X,\*x',Y,f,\varrho)$. For realizations of random variables $\{\*x_i\}_{i=1}^n$ and $\{y_i\}_{i=1}^n$, a baseline constant $\*x'$, a sample risk attribution method (SRAM) is any function of the form $\widehat{\*A}(\{\*x_i\}_{i=1}^n,\*x',\{y_i\}_{i=1}^n,\widehat{f},\widehat{\varrho})$, where $\widehat{f}$ is the learned function and $\widehat{\varrho}$ is the sample risk measure. We denote $A_i(\*X,\*x',Y,f,\varrho)$ and $\widehat{A}_i(\{\*x_i\}_{i=1}^n,\*x',\{y_i\}_{i=1}^n,\widehat{f},\widehat{\varrho})$ for the $i$th attribution of $\*A(\*X,\*x',Y,f,\varrho)$ and $\widehat{\*A}(\{\*x_i\}_{i=1}^n,\*x',\{y_i\}_{i=1}^n,\widehat{f},\widehat{\varrho})$. Often, we will use only $A_i(f)$ and $\widehat{A}_i(f)$ or $A_i$ and $\widehat{A}_i$ in short.
\end{definition}

\begin{remark}
    We distinguish between BAMs and RAMs by using  $\mathcal{A}_i$ and $A_i$ for the $i$-th attribution.
\end{remark}

\subsection{Shapley value for risk attribution methods}
We discuss how to extend the Shapley value for RAMs. In Shapley value's framework \eqref{eq:Shapley}, we define new characteristic functions 
\begin{align*}
    v_R(\*X,\*x',f,\varrho;S) = 
        \varrho(f(\*X_S; \*x'_{M \backslash S}))
\end{align*}
for RAMs and
\begin{align*}
    v_R(\{\*x_i\}_{i=1}^n,\*x',\widehat{f},\varrho;S) = \widehat{\varrho}(\{\widehat{f}((\*x_S)_i; (\*x'_{M \backslash S})_i)\}_{i=1}^n )
\end{align*}
for SRAMs. Often, we will use only $v_R(S)$ in short.  We follow BShap and use $\*x'$ as the baseline values. In practice, we pick $\*x'$ with constant to represent the case if the corresponding feature has no randomness. As an example, for a simple portfolio with two assets $f(X_1,X_2) = X_1+X_2$, let $\*x' = (0,0)$, then if $S = \{1\}$, we have $v_R(S) = \varrho(X_1)$. In this case, we ask what if $X_2$ is only a risk-free asset with zero return. We apply Shapley value~\eqref{eq:Shapley} with $v_R(S)$ and call the resulting method BShap (RAM) and denote the $i$-th attribution as $\text{BS}_i^R(\*X,\*x',Y,f,\varrho)$ or often $\text{BS}_i^R(f)$ and $\text{BS}_i^R$ in short.

\begin{example}\label{eg:eg_var}
    In general, no closed-form solutions are available. Consider $f(X_1,X_2) = X_1 + X_2$ with $\text{Var}(X_1) = \sigma_1^2$, $\text{Var}(X_2) = \sigma_2^2$, and $\text{corr}(X_1,X_2) = \rho$. For $\varrho(\cdot) = \text{STD}(\cdot)$ and $\*x' = (0,0)$, we calculate characteristic functions
    \begin{align*}
        & v_R(\{1\}) = \sigma_1, \ v_R(\{2\}) = \sigma_2, v_R(\{1,2\}) = \sqrt{\sigma_1^2 + \sigma_2^2 + 2\rho \sigma_1 \sigma_2}.
    \end{align*}
    As a result, we have
    \begin{align*}
        \text{BS}_1^R = \frac{\sigma_1}{2} + \frac{\sqrt{\sigma_1^2+\sigma_2^2 + 2\rho \sigma_1 \sigma_2}-\sigma_2}{2},  \text{BS}_2^R = \frac{\sigma_2}{2} + \frac{\sqrt{\sigma_1^2+\sigma_2^2 + 2\rho \sigma_1 \sigma_2}-\sigma_1}{2}.
    \end{align*}

\end{example}

% Examples with risk measures Var/VaR/CVaR are provided in Appendix~\ref{sec:eg} and \ref{sec:portfolio_theory}.

\begin{remark}
We distinguish BShap between BAMs and RAMs by using $\text{BS}_i^B$ and $\text{BS}_i^R$ for the $i$-th attribution, respectively. Similarly, for characteristic functions, we use $v_R$ for RAMs, $v_B$ for BAMs, and $v$ for general Shapley value. \textbf{The Shapley value formula \eqref{eq:Shapley} applies the same to both problems and only the characteristic functions are different.} 
\end{remark}

\subsubsection{Portfolio theory}

In the case of portfolio theory, more results may be obtained. First, if we consider variance as the risk measure, there is a special case that allows us to obtain more analytical results, as discussed in  \cite{colini2018variance}.
    Consider a portfolio $X_1+\dots+X_m$, where $\sigma_i^2$ is the variance of $X_i$ and $\rho_{i,j}$ is the correlation between $X_i$ and $X_j$. Then BShap yields that
$
        \text{BS}_i^R = \sigma_i^2 +  \sum_{j \neq i}  \rho_{i,j} \sigma_i \sigma_j.
$

For more general cases, we consider the principle of diversification, which suggests that owning a variety of financial assets is less risky than owning only one. This is reflected in the following Sub-additivity property for risk measures \citep{artzner1999coherent,rockafellar2002conditional}. 

% \begin{example}
%     In portfolio theory, there is a special case that allows us to obtain more analytical results \cite{colini2018variance}.
%     Consider a portfolio $X_1+\dots+X_m$, where $\sigma_i^2$ is the variance of $X_i$ and $\rho_{i,j}$ is the correlation between $X_i$ and $X_j$. Then BShap yields that
% $
%         \text{BS}_i^R = \sigma_i^2 +  \sum_{j \neq i}  \rho_{i,j} \sigma_i \sigma_j.
% $
% \end{example}

\begin{definition}[Sub-additivity]
    By the diversification principle, when two portfolios are combined, the risk cannot be greater than when they are combined separately. We say $\varrho$ preserves Sub-additivity if $Y_1, Y_2 \in \mathcal{L}$, 
$
    \varrho(Y_1+Y_2) \leq \varrho(Y_1) + \varrho(Y_2). 
$
\end{definition}

Therefore, we should expect to observe the impact of diversification when we calculate risk attributions. In the following proposition, we demonstrate how BShap preserves sub-additivity with proof in Appendix~\ref{sec:proof}. 

\begin{proposition} \label{prop:Sh_sub_addivitiy}
    For $f(\*X) = X_1 + \dots + X_m$, if $\varrho$ preserves Sub-additivity, then
    $\text{BS}^R_i \leq \varrho(X_i)$. As a result of sub-additivity, BShap (RAM) could reflect the principle of diversification, which reduces each individual risk attribution when assets are added to the portfolio. 
\end{proposition}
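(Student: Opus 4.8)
The plan is to exploit the linear structure of $f$ together with the zero baseline to rewrite the characteristic function in a form where Sub-additivity applies directly to each marginal contribution, and then to sum the resulting bounds using the fact that the Shapley weights form a probability distribution.

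First I would observe that since $f(\*X) = X_1 + \dots + X_m$ is linear and $\*x' = \*0$, the baseline merely zeroes out the omitted coordinates, so that for any $S \subseteq M$,
\[
    v_R(S) = \varrho\!\left(\sum_{j \in S} X_j\right).
\]
In particular $v_R(\emptyset) = \varrho(0) = 0$, consistent with the Shapley framework. Next I would bound the marginal contribution of feature $i$ to each coalition $S \subseteq M \backslash i$. Writing $Y_S = \sum_{j \in S} X_j$, the marginal contribution is $v_R(S \cup i) - v_R(S) = \varrho(Y_S + X_i) - \varrho(Y_S)$. Applying Sub-additivity with $Y_1 = Y_S$ and $Y_2 = X_i$ gives $\varrho(Y_S + X_i) \le \varrho(Y_S) + \varrho(X_i)$, hence
\[
    v_R(S \cup i) - v_R(S) \le \varrho(X_i) \quad \text{for every } S \subseteq M \backslash i.
\]
This uniform term-by-term domination is the heart of the argument.

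Finally I would substitute this bound into the Shapley formula \eqref{eq:Shapley}:
\[
    \text{BS}^R_i = \sum_{S \subseteq M \backslash i} \frac{|S|!(|M|-|S|-1)!}{|M|!}\bigl(v_R(S\cup i) - v_R(S)\bigr) \le \varrho(X_i) \sum_{S \subseteq M \backslash i} \frac{|S|!(|M|-|S|-1)!}{|M|!}.
\]
It then remains to verify that the Shapley coefficients sum to one over $S \subseteq M\backslash i$: grouping the $\binom{m-1}{s}$ subsets of each size $s$ collapses each term to $1/m$, and summing over $s = 0, \dots, m-1$ gives $1$. This yields $\text{BS}^R_i \le \varrho(X_i)$.

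I do not anticipate a genuine obstacle here; the argument is essentially a coalition-wise bound followed by a normalization. The only points requiring mild care are confirming that the zero baseline really reduces $f(\*X_S;\*x'_{M\backslash S})$ to the partial sum $\sum_{j\in S} X_j$ (which relies on linearity of $f$), and checking the weight normalization. The interpretive claim about diversification then follows immediately by reading $\text{BS}^R_i \le \varrho(X_i)$ as saying that asset $i$'s attributed risk inside the portfolio never exceeds its stand-alone risk $\varrho(X_i)$.
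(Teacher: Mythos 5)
Your proposal is correct and follows essentially the same route as the paper's own proof: rewrite $v_R(S)$ as $\varrho(\sum_{j\in S}X_j)$ via the zero baseline, bound each marginal contribution $\varrho(Y_S+X_i)-\varrho(Y_S)\le\varrho(X_i)$ by Sub-additivity, and sum using the fact that the Shapley weights total one. The only difference is that you spell out the baseline reduction and the weight normalization explicitly, which the paper leaves implicit.
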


\subsection{Fundamental axioms}

We present fundamental axioms for RAMs that are analogous to Section~\ref{sec:BAM} and are expected to hold in general.  A similar idea can be applied to SRAMs. \textbf{We mainly use RAMs for demonstration of analytical examples and SRAMs in practice.} 

\begin{definition}
\textbf{Fundamental axioms for RAMs}. 

$\bullet$ \textbf{Completeness (RAM):} 
    $\sum_{i=1}^m A_i = \varrho(f(\*X)) - \varrho(f(\*X'))$.

$\bullet$ \textbf{Dummy (RAM):}
    If $\frac{\partial f(\*x)}{\partial x_i} = 0$, then $A_i = 0$.

$\bullet$ \textbf{Symmetry (RAM):}
    Define $\*x^*$ by swapping $x_i$ and $x_j$ and the probability density function of $f(\*X)$ by $g(\*x)$.  
    Suppose $g(\*x) = g(\*x^*)$, $\forall \*x$. Then if $x_i' = x_j'$, then $A_i = A_j$.
    % Suppose the distribution of $f(\*X)$ is symmetric about $X_i$ and $X_j$, then $A_i = A_j$.
\end{definition}

\begin{theorem} \label{thm:BShap_RAM_basic_axioms}
    BShap (RAM) preserves Completeness (RAM), Dummy (RAM), and Symmetry (RAM). 
\end{theorem}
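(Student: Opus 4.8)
The plan is to reduce each of the three RAM axioms to the corresponding already-recorded fundamental axiom for the Shapley value (Completeness (SH), Dummy (SH), Symmetry (SH)), applied to the specific characteristic function $v_R(S) = \varrho(f(\*X_S;\*x'_{M\setminus S}))$. Since $\text{BS}_i^R = \text{SH}_i(v_R)$ by definition, it suffices to (i) rephrase each RAM hypothesis and conclusion in terms of the coalition values $v_R(S)$, and (ii) invoke the matching Shapley axiom. A useful observation throughout is that the formula \eqref{eq:Shapley} only ever uses the marginal differences $v_R(S\cup i)-v_R(S)$, so all three properties are inherited from $v_R$ with no new combinatorics.

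For Completeness (RAM), I would apply Completeness (SH) to $v_R$ to obtain $\sum_{i=1}^m \text{BS}_i^R = v_R(M) - v_R(\emptyset)$. This efficiency identity holds for an arbitrary characteristic function, since it follows from the permutation-average representation of \eqref{eq:Shapley} in which the per-permutation marginal contributions telescope to $v_R(M)-v_R(\emptyset)$; in particular we do not need $v_R(\emptyset)=0$. It then remains to evaluate the endpoints: $v_R(M)=\varrho(f(\*X))$, because every coordinate carries its random value, and $v_R(\emptyset)=\varrho(f(\*x'))=\varrho(f(\*X'))$, because every coordinate carries the baseline. Substituting gives exactly $\sum_i \text{BS}_i^R = \varrho(f(\*X)) - \varrho(f(\*X'))$.

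For Dummy (RAM), I would first use that $f\in\mathcal{F}$ is real analytic, so $\partial f/\partial x_i \equiv 0$ forces $f$ to be constant in its $i$-th argument. Consequently, for every $S\subseteq M\setminus i$ and every realization, replacing the baseline $x_i'$ by the random $X_i$ leaves the value of $f$ unchanged, whence $f(\*X_{S\cup i};\*x'_{M\setminus(S\cup i)}) = f(\*X_S;\*x'_{M\setminus S})$ as random variables. Applying $\varrho$ yields $v_R(S\cup i)=v_R(S)$ for all such $S$, and Dummy (SH) then delivers $\text{BS}_i^R=0$. Note this step needs only pointwise equality of the two random variables, so it does not rely on any structural property of $\varrho$.

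The delicate case is Symmetry (RAM), which I would establish by verifying the Shapley hypothesis $v_R(S\cup i)=v_R(S\cup j)$ for every $S\subseteq M\setminus\{i,j\}$ and then invoking Symmetry (SH). Fixing such an $S$, I compare the two coalition variables: in $v_R(S\cup i)$ coordinate $i$ is random and coordinate $j$ is fixed at $x_j'$, while in $v_R(S\cup j)$ the roles are reversed. Using the $i\leftrightarrow j$ symmetry of $f$ together with $x_i'=x_j'$, one may interchange these two arguments so that $v_R(S\cup j)$ differs from $v_R(S\cup i)$ only in that the random coordinate is $X_j$ rather than $X_i$; the exchangeability of $\*X$ encoded by $g(\*x)=g(\*x^*)$ then gives $(\*X_S,X_i)\stackrel{d}{=}(\*X_S,X_j)$, so the two coalition variables share a common law. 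The final and most important ingredient is that the risk measure $\varrho$ is law-invariant, i.e. $\varrho(Y)$ depends on $Y$ only through its distribution (as holds for STD, VaR, and CVaR), which upgrades this distributional equality to $v_R(S\cup i)=v_R(S\cup j)$. I expect this step to be the main obstacle: unlike Completeness and Dummy, here the two coalition variables agree only in law and not pointwise, so the argument genuinely requires both the bookkeeping of which coordinates are random versus baseline and the explicit appeal to law-invariance of $\varrho$.
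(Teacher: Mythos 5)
Your proof is correct and follows essentially the same route as the paper's own (very terse) proof: reduce each RAM axiom to the corresponding fundamental Shapley axiom applied to the characteristic function $v_R$. The extra detail you supply is an improvement rather than a deviation --- in particular, your Symmetry argument makes explicit the law-invariance of $\varrho$ (true for STD, VaR, CVaR) that the paper's one-line reduction silently assumes, and your telescoping remark correctly handles the fact that $v_R(\emptyset)=\varrho(f(\*x'))$ need not vanish.
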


\section{Comparisons of BAMs and RAMs}

RAMs and BAMs differ primarily in that the former focuses on the mean attributions whereas the latter focuses on the risk attributions. Consider the following portfolio $f(X_1,X_2) = X_1 + X_2$, in which $X_1$ follows a normal distribution with $\mathcal{N}(\mu,\sigma^2)$ and $X_2$ is a constant $r$, which can be considered coming from the risk-free interest rate. Assume that the standard deviation is used to measure risk. By RAMs, we expect that there is no risk associated with $X_2$ and therefore $A_2=0$. By BAMs, there is an attribution from $X_2$ with $\mathcal{A}_2 = r$. The focus of BAMs and RAMs, therefore, differs significantly. Following this, we illustrate some differences between BAMs and RAMs in terms of axioms.

\subsection{Linearity}

As for BAMs, the Linearity (BAMs) implies the Linearity (SH). This is no longer true for RAMs due to the nonlinearity of risk measures. Consider an example 
$        f(X_1,X_2) = X_1, \ g(X_1,X_2) = X_2
$
    where the variance of $X_1$ and $X_2$ are $\sigma_1^2$ and $\sigma_2^2$ and the correlation between them is $\rho$. Suppose $\varrho(\cdot) = \text{STD}(\cdot)$. Then by Completeness (RAM) and Dummy (RAM), we have
    \begin{align*}
        A_1(f) = \sigma_1, A_2(f) = 0, A_1(g) = 0, A_2(g) = \sigma_2.
    \end{align*}
    If Linearity (BAM) holds, then we have
$
        A_i(f+g) = A_i(f) + A_i(g).
$
    However, if this were true,
    \begin{align*}
        A_1(f+g) + A_2(f+g) 
        = A_1(f) + A_1(g) + A_2(f) + A_2(g) = \sigma_1 + \sigma_2 \neq \sqrt{\sigma_1^2 + \sigma_2^2 + 2 \rho \sigma_1 \sigma_2} = \varrho(f+g),
    \end{align*}
    which is in contrast to the Completeness (RAM). \textbf{Therefore, Linearity is incompatible with Completeness (RAM) and Dummy (RAM).} Incompatible means that these axioms cannot be held simultaneously. \textbf{In view of this, there is no corresponding Linearity axiom for RAMs as there is for BAMs.}

While Linearity is not required for RAMs, BShap (RAM) does require Linearity (SH) for characteristic functions. It should be noted that this may not be a strict requirement for RAMs. As for Example~\ref{eg:eg_var}, by Euler's decomposition \cite{meucci2005risk}, we can decompose the STD of a portfolio
\begin{align*}
    \text{STD}(f(X_1,X_2)) = \frac{ \sigma_1^2 +  \rho \sigma_1 \sigma_2}{\sqrt{\sigma_1^2+\sigma_2^2+2\rho \sigma_1 \sigma_2}} 
    + \frac{ \sigma_2^2 +  \rho \sigma_1 \sigma_2}{\sqrt{\sigma_1^2+\sigma_2^2+2\rho \sigma_1 \sigma_2}}.
\end{align*}
It can be easily observed that Completeness (RAM), Dummy (RAM), and Symmetry (RAM) are preserved by this formula. Consequently, Euler's decomposition also provides a fair allocation of risk. 
In terms of characteristic functions, it can be rewritten as
\begin{align*}
    \text{STD}(f(X_1,X_2)) = \frac{v_R^2(\{1,2\})+v_R^2(\{1\})-v_R^2(\{2\})}{2v_R(\{1,2\})} 
    + \frac{v_R^2(\{1,2\})+v_R^2(\{2\})-v_R^2(\{1\})}{2v_R(\{1,2\})}.
\end{align*}
It is clear that this formula is not linear in the sense of Linearity (SH). Although it's still easy to interpret a linear function by Euler's decomposition, for a general complex $f$, a nonlinear interpretation might be too complicated. \textbf{Therefore, Linearity (SH) for BShap mainly helps make the explanation formula more transparent in the context of RAMs.} 

\begin{remark}
    Euler's decomposition is only applicable to homogeneous functions of degree one. Thus, it cannot be directly generalized to a general complex nonlinear function $f$. 
\end{remark}

\subsection{Symmetry}

The concept of symmetry is generally concerned with the fairness of allocation. Two features that behave the same should receive the same attribution. Different from BAM, symmetry (RAM) further requires that $X_i$ and $X_j$ are also symmetric in $\*X$ since the distribution of features matters. By directly applying BShap (BAM), the symmetry of features may be overlooked, resulting in unfair allocations, as illustrated in the following example. 

\begin{example}
    
    As a simple example, consider a portfolio $Y_t = f(X_{1,t},X_{2,t}) = X_{1,t} + X_{2,t}$ as the daily return of the portfolio with two assets. Suppose we use the Value-at-Risk (VaR) as the risk measure that $\widehat{\varrho}(\cdot) = \widehat{\text{VaR}}_{\alpha}(\cdot)$, and we find $y_{\tau} = -\widehat{\text{VaR}}_{\alpha} (Y_t)$ at date $\tau$. Suppose we apply BShap (BAM) to $y_{\tau}$ and have $\text{BS}_i(v_B) = x_{i,\tau}$. This calculation focus only on $x_{1,\tau}$ and $x_{2,\tau}$ and neglects the distribution of $X_{1,t}$ and $X_{2,t}$. Suppose $\{x_{1,t}+x_{2,t}\}_{i=1}^n$ is symmetric about $\{x_{1,t}\}_{i=1}^n$ and $\{x_{2,t}\}_{i=1}^n$. By Symmetry (RAM), we would expect to have $A_1 = A_2$. It may be the case, however, that $x_{1,\tau}$ contributes most to $y_{\tau}$ such that $\text{BS}_1(v_B)$ is much greater than $\text{BS}_2(v_B)$ when $x_{1,\tau}$ is much less than $x_{2,\tau}$. Thus, the Symmetry (RAM) is violated and the explanation is unfair. 
\end{example}

\subsection{Monotonicity axioms}

There have been extensive studies of domain knowledge-inspired monotonicity on attribution methods, \citep{friedman1999three,lundstrom2023four,chen2023address,chen2022monotonic,young1985producer}. The monotonicity axiom enables us to ensure the sign of attributions and to compare the relative importance of features. It is often possible to infer monotonicity from the derivatives of functions for BAMs. Therefore, the following monotonicity axioms for BAMs are often considered \citep{lundstrom2023four}.

\begin{definition}\label{def:mono_axiom_BAM} 
\textbf{Monotonic axioms for BAMs}. 
\begin{itemize}
\item \textbf{Individual Monotonicity (BAM):}
    Suppose $\frac{\partial f(\*x)}{\partial x_{i}} \geq 0$, then
$
        \mathcal{A}_{i} \geq 0.
$
\item \textbf{Pairwise Monotonicity (BAM):}
If $ \frac{\partial f(\*x)}{\partial x_i} \leq \frac{\partial f(\*x)}{\partial x_j}$ and $x_i \leq x_j$, then 
$
\mathcal{A}_i \leq \mathcal{A}_j.
$
\item \textbf{Symmetric Monotonicity (BAM):}
    If $\frac{\partial f(\*x)}{\partial x_i} \leq \frac{\partial g(\*x)}{\partial x_j}$ and $x_i \leq x_j$, then
$
\mathcal{A}_i(f) \leq \mathcal{A}_j(g).
$
\end{itemize}
\end{definition}        

\begin{example} \label{eg:mono_BAMs}
We provide the following examples in credit scoring to demonstrate monotonicity.
\begin{itemize}
\item The probability of default increases with the number of past-due payments. Due to this, the corresponding attribution should also be positive. 
\item Suppose we use $x_i$ and $x_j$ to count the number of past dues less than and more than three months, then for each additional past due, $x_j$ should have a greater impact on the probability of default. Accordingly, if $x_j \geq x_i$, we should expect a larger attribution for $x_j$. 
\item Suppose that we compare two different credit scoring models for $x_i$ and $x_j$, we expect to compare their attributions if there exists clear monotonicity. 
\end{itemize}
\end{example}

As demonstrated in the following example, we can also expect some monotonicity for RAMs. 

\begin{example}
Consider nonlinear factor models 
$
    Y_t^1 = f(X_{1,t},X_{2,t}), Y_t^2 = g(X_{1,t},X_{2,t}),
$
where at $t$-th day, $Y_t^1$ and $Y_t^2$ are daily returns of Google and Capital One, and $X_{1,t}$ and $X_{2,t}$ are daily returns of the entire Finance sector and Technology sector. 
\begin{itemize}
\item Google is expected to have positive exposure to risk of the Technology sector. 
\item Google is more exposed to the risk of the Technology factor than to the Financial factor.
\item Google has a larger exposure to the risk of the Technology sector than Capital One. 
\end{itemize}
\end{example}

For BAMs, monotonicity is studied from the standpoint of the derivative of the function, e.g., $\frac{\partial f}{\partial x_i}(\*x) \geq 0$. This condition, however, may not be directly applicable to RAM content. 
Consider a simple example of $f(X_1) = X_1$,
where $X_1$ follows the standard normal distribution. While $f$ and $-f$ has the opposite signs, the distributions of $f(X_1)$ and $-f(X_1)$ are the same and we have $\varrho(f) = \varrho(-f)$. Therefore, Monotonicity on the function $f$ is irrelevant. For RAMs, we instead focus on characteristic functions as in Section~\ref{sec:Shap}. However, as demonstrated below, in contrast to BAMs \citep{lundstrom2023four}, Symmetryic Monotonicity (SH) for RAMs when $i \neq j$ is not expected. 

% \begin{proposition} \label{prop:SM_fail}
%     Symmetric Monotonicity (RAM) for $i \neq j$ is incompatible with Dummy (RAM) and Completeness (RAM). Incompatible means that these axioms cannot be held simultaneously.  
% \end{proposition}

\begin{example} \label{eg:SM_incompatible}
Symmetric Monotonicity (RAM) for $i \neq j$ is incompatible with Dummy (RAM) and Completeness (RAM). 
% Incompatible means that these axioms cannot be held simultaneously.  
Consider 
\begin{align*}
    f(X_1,X_2) = X_1 + X_2, \ g(X_1,X_2) = X_2, \ h(X_1,X_2) = X_1. 
\end{align*}
 By Dummy (RAM) and Completeness (RAM), we have $A_2(g) = \varrho(X_2)$ and $A_1(h) = \varrho(X_1)$. By Symmetric Monotonicity (RAM) for $i \neq j$, we  have $A_1(f)=A_2(g)$ and $A_2(f) = A_1(h)$. Then we no longer have Completeness (RAM). The relative importance of risk attributions for different features in different models cannot, therefore, be directly compared. 

 For a specific example of BShap (RAM), suppose $\text{Var}(X_1) = \text{Var}(X_2)= \sigma^2$ and $\rho>0$. By Symmetry (RAM), we know that to $f$, $X_1$ and $X_2$ bring the same amount of risk. Furthermore, by Theorem~\ref{thm:BShap_mono_global}, $X_2$ brings more risk to $f$ than to $g$. However, we cannot conclude that $X_1$ brings the same amount of risk to $f$ as the $X_2$ to $g$ even though they have the same distribution. By BShap (RAM),  we have
    \begin{align*}
        \textbf{BS}^R(f) = \left[ \begin{matrix} \sigma^2 + \rho \sigma^2 \\
        \sigma^2 + \rho \sigma^2 \end{matrix} \right], \
        \textbf{BS}^R(g) = \left[ \begin{matrix} 0 \\ \sigma^2 \end{matrix} \right].
    \end{align*}

\end{example}

\section{Empirical examples}
In this section, we demonstrate how BShap allocates risk attributions using a variety of examples. To the best of our knowledge, no other methods have been applied to allocate risk for general models in the past. 

\subsection{Portfolio analysis (linear case)}
% \begin{figure}[h]
%     \centering
%     \includegraphics[scale=0.3]{figures/stock_return}
%     \caption{Log returns of stocks}
%     \label{fig:stock_return}
% \end{figure}

We include daily log returns of Microsoft, Pfizer, Tesla, JP Morgan, and Netflix from 2011 to 2023. 
WLOG, we denote them as $X_1-X_5$. These stocks are chosen from different industry sectors.
In practice, portfolio optimization has been extensively used to diversify risk \citep{rockafellar2000optimization,rockafellar2002conditional}. For this example, we consider minimizing Conditional Value-at-Risk (CVaR) using $\alpha=0.05$ for the selected five stocks in 2023,
\begin{align*}
\begin{cases}
   & \min_{\*c} \text{CVaR}_{\alpha}(\*X \*c), \\
   & \sum_{i} c_i = 1 \text{ and } c_i \geq 0, \forall i.
\end{cases}
\end{align*}
By optimization, we obtain that
$
\*c = \left[ \begin{matrix}
        0.42 & 0.19 & 0.0033 & 0.30 & 0.083 \end{matrix} \right].
$
The original CVaR for returns ($\text{CVaR}_{\alpha}(X_i)$ for the $i$-th entry) are 
\begin{align*}
    \left[ \begin{matrix} 0.030 & 0.038 & 0.069 & 0.031 & 0.043 \end{matrix} \right].
\end{align*}
In the optimized portfolio, the $\text{CVaR}_{\alpha}(\*X \*c)$ is reduced to $0.019$, which demonstrates the effectiveness of the optimization. Applying BShap (RAM), with the help of Proposition~\ref{prop:Sh_sub_addivitiy} for Sub-additivity, we are able to see the reduction of risks for each component by comparing the CVaR of $X_i c_i$ in Figure~\ref{fig:port_CVaR_original} and the risk attributions of the portfolio in Figure~\ref{fig:port_CVaR_opti}. Among the portfolios, Microsoft's CVaR has been reduced the most, which makes sense since Microsoft has been selected with the greatest exposure. 

% \begin{figure*}[h]
%     \begin{subfigure}{}
%   \includegraphics[width=\textwidth,height=3cm]{figures/TSLA_CVaR}
%   \caption{Risk attributions for Tesla with $\varrho(\cdot) = \text{CVaR}(\cdot)$ from 2019 to 2023.}
%     \label{fig:Tesla_CVaR}
%     \end{subfigure}
%     \begin{subfigure}{}
%   \includegraphics[width=\textwidth,height=3cm]{figures/GOOG_CVaR}
%   \caption{Risk attributions for Google with $\varrho(\cdot) = \text{CVaR}(\cdot)$ from 2019 to 2023.}
%     \label{fig:GOOG_CVaR}
%     \end{subfigure}
%   %   \begin{subfigure}{}
%   % \includegraphics[width=\textwidth,height=3cm]{figures/Tesla_STD}
%   % \caption{Risk attributions for Tesla with $\varrho(\cdot) = \text{STD}(\cdot)$ from 2019 to 2023.}
%   %   \label{fig:Tesla_STD}
%   %   \end{subfigure}
%   %   \begin{subfigure}{}
%   % \includegraphics[width=\textwidth,height=3cm]{figures/GOOG_STD}
%   % \caption{Risk attributions for Google with $\varrho(\cdot) = \text{STD}(\cdot)$ from 2019 to 2023.}
%   %   \label{fig:GOOG_STD}
%   %   \end{subfigure}
% \end{figure*}

\begin{figure}[ht]
\centering
\begin{minipage}{0.5\textwidth}
  \includegraphics[scale=0.35]{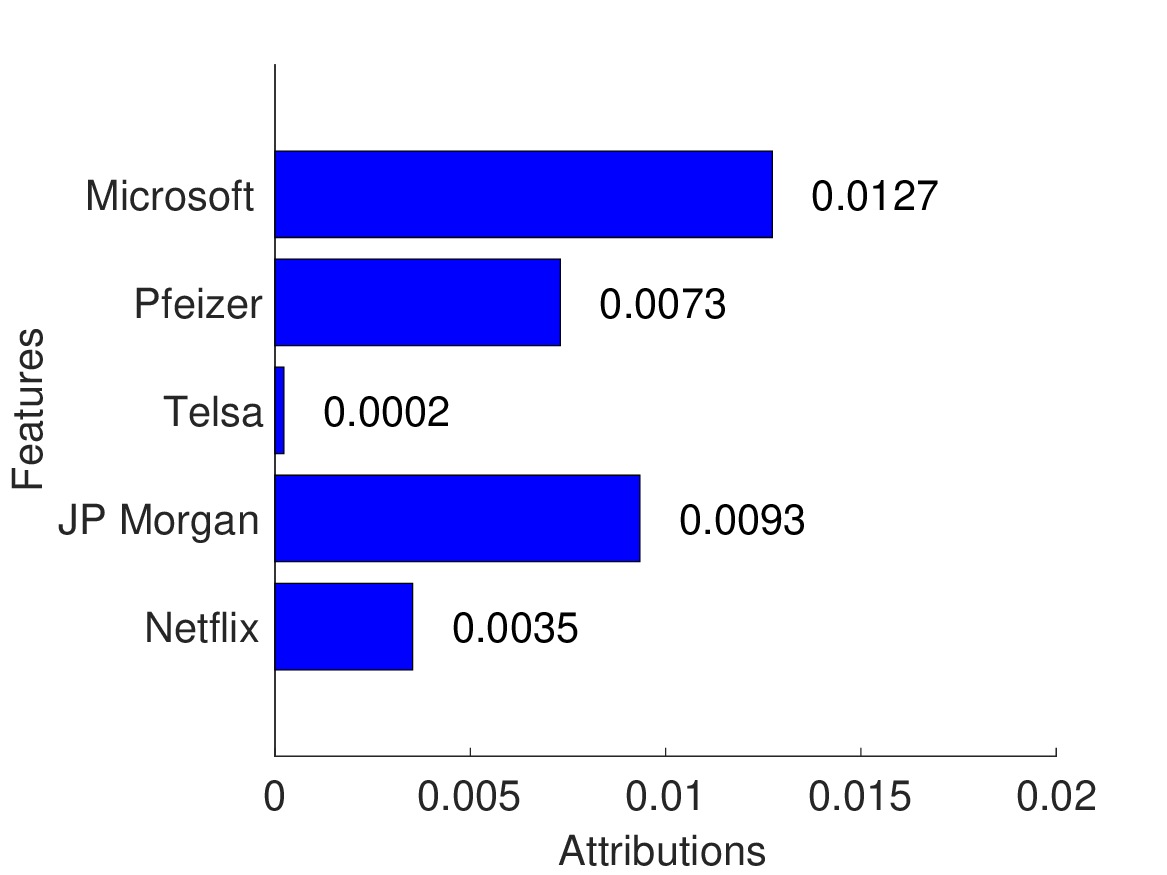}
    \caption{CVaR of each stock with $c_i$}
    \label{fig:port_CVaR_original}
\end{minipage}%
\begin{minipage}{0.5\textwidth}
  \includegraphics[scale=0.35]{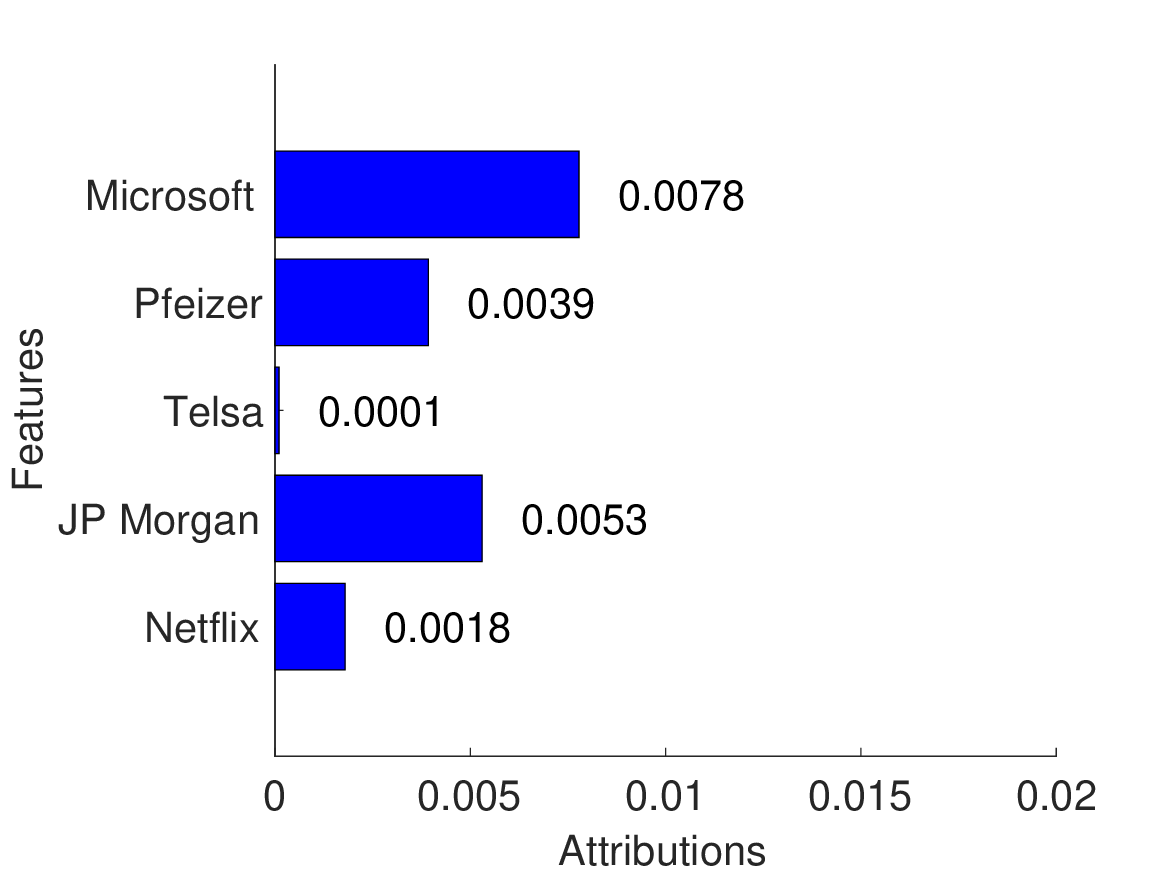}
    \caption{Risk attributions (CVaR) of the portfolio}
    \label{fig:port_CVaR_opti}
\end{minipage}
\end{figure}

\subsection{Nonlinear factor models}

We apply a neural network to the daily log return of Tesla and Google (Alphabet) from 2011 to 2023 with industry factors as features. The stock prices of Google and Tesla were collected from Yahoo Finance. We downloaded industry factors from the French's website\footnote{\url{https://mba.tuck.dartmouth.edu/pages/faculty/ken.french/data_library.html}}. A total of 12 industry factors are used. In order to make it easier to understand, we have renamed Business Equipment as Technology and Money as Finance. 
% \begin{enumerate}
%     \item Consumer Nondurables: Such as Food, Tobacco, Textiles, Apparel, Leather, Toys
%     \item Consumer Durables: Such as Cars, TVs, Furniture, Household Appliances
%     \item Manufacturing: Such as Machinery, Trucks, Planes, and Paper, 
%     \item Energy: Such as Oil, Gas, and Coal Extraction and Products
%     \item Chemical: Such as Chemical and Allied Products
%     \item Technology: Such as Computers, Software, and Electronic Equipment
%     \item Telm: Such as Telephone and Television Transmission
%     \item Utilities: Utilities
%     \item Shops: Such as Wholesale, Retail, and Some Services (Laundaries, Repair shops)
%     \item Healthcare: Such as Medical Equipment, and Drugs
%     \item Finance: Finance
%     \item Other: Such as Mines, Hotels, and Entertainment
% \end{enumerate}
% The website provides a detailed explanation of each industry factor. 

For neural networks, we use the architecture of $[16,8]$ with Relu activations and $l2$ regularization. Using the conjugate gradient, we solve the optimization problem and stop iterating after 400 steps. A training set consists of data from 2011 to 2018, a validation set consists of data from 2019 to 2020, and a test set consists of data from 2021 to 2023. Using the validation set, the regularization parameters are determined through grid searching. The squared root of the mean squared error is used to measure error. For Tesla, the error rate is 0.012 and for Google, it is 0.011. The performance of the model can certainly be improved, but that is not the primary objective of this study. The model is only used for demonstration purposes. 

% \begin{figure}[h]
%     \centering
%     \begin{minipage}{0.4\textwidth}
%         \centering
%         \includegraphics[width=\linewidth]{figures/TSLA_GOOG_traj.eps}
%         \caption{Log prices of Telsa and Google from 2011 to 2023}
%         \label{fig:TSLA_GOOG_traj}
%     \end{minipage}\hfill
%     \begin{minipage}{0.4\textwidth}
%         \centering
%         \includegraphics[width=\linewidth]{figures/industry_traj.eps}
%         \caption{Log returns of 12 industry factors from 2011 to 2023}
%         \label{fig:industry_traj}
%     \end{minipage}
% \end{figure}

\paragraph{SRAM for factor models}
For factor models, it is important to note that the residual term $\epsilon$ also contributes to the risk. In this example, $\epsilon$ represents the idiosyncratic risk, which is the risk specific to the company. Consequently, $\epsilon$ must be added as an additional feature, and we denote it as the $m+1$ feature. Consequently, we consider the function
$
    \widetilde{f}(\*x,\widehat{\epsilon}) = \widehat{f}(\*x) + \widehat{\epsilon},
$
whereas $\widehat{\epsilon}_i = y_i - \widehat{f}(\*x_i)$ for SRAM. Then BShap (RAM) is applied to $\widetilde{f}$. For this example, we consider $\varrho(\cdot) = \text{CVaR}_{5\%}(\cdot)$.

\paragraph{Results}

Risk attributions ($\times 100$) for Tesla and Google in 2023 are plotted in Figure~\ref{fig:TSLA_CVaR} and \ref{fig:GOOG_CVaR}.
Results are in accordance with domain knowledge and Theorem~\ref{thm:BShap_mono_global}.

$\bullet$ BShap (RAM) produces positive risk attributions for relevant risky factors. 

$\bullet$ BShap (RAM) has ranked features by their risk level. Tesla, for example, has the largest exposure to Consumer Durables since it focuses on electric vehicles, followed by Manufacturing, Technology, and Finance. 

$\bullet$ BShap (RAM) has been able to compare the risk levels of different assets for the same factor. For example, in line with expectations, Google is much more exposed to Technology than Tesla. 

$\bullet$ Idiosyncratic risks are important. As an example, it is the largest risk factor for Google this year. It is consistent with financial analysts' perception that a large portion of Google's risk this year is due to Google's specific risk. For example, Google's stock price fell by more than $9\%$ in October 2023 due to disappointing cloud computing earnings\footnote{https://www.wsj.com/livecoverage/stock-market-today-dow-jones-10-25-2023/card/alphabet-s-stock-falls-as-cloud-slowdown-outweighs-strong-earnings-59VFE9hO7dst4QDfGFBg}.

% \begin{figure}[ht]
%     \centering
%     \includegraphics[scale=0.3]{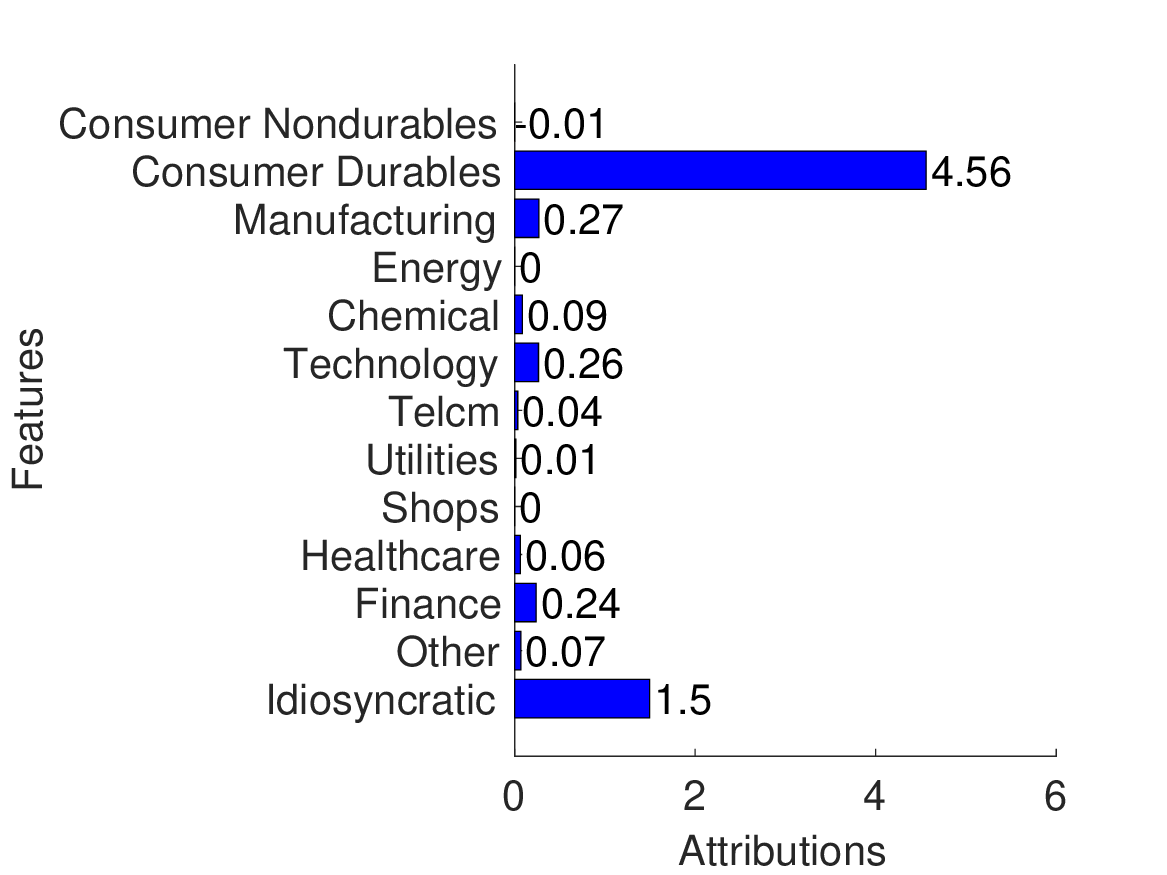}
%     \caption{BShap ($\times 100$) of Tesla using CVaR($5\%$) in 2023}
%     \label{fig:TSLA_CVaR}
% \end{figure}

% \begin{figure}[ht]
%     \centering
%     \includegraphics[scale=0.3]{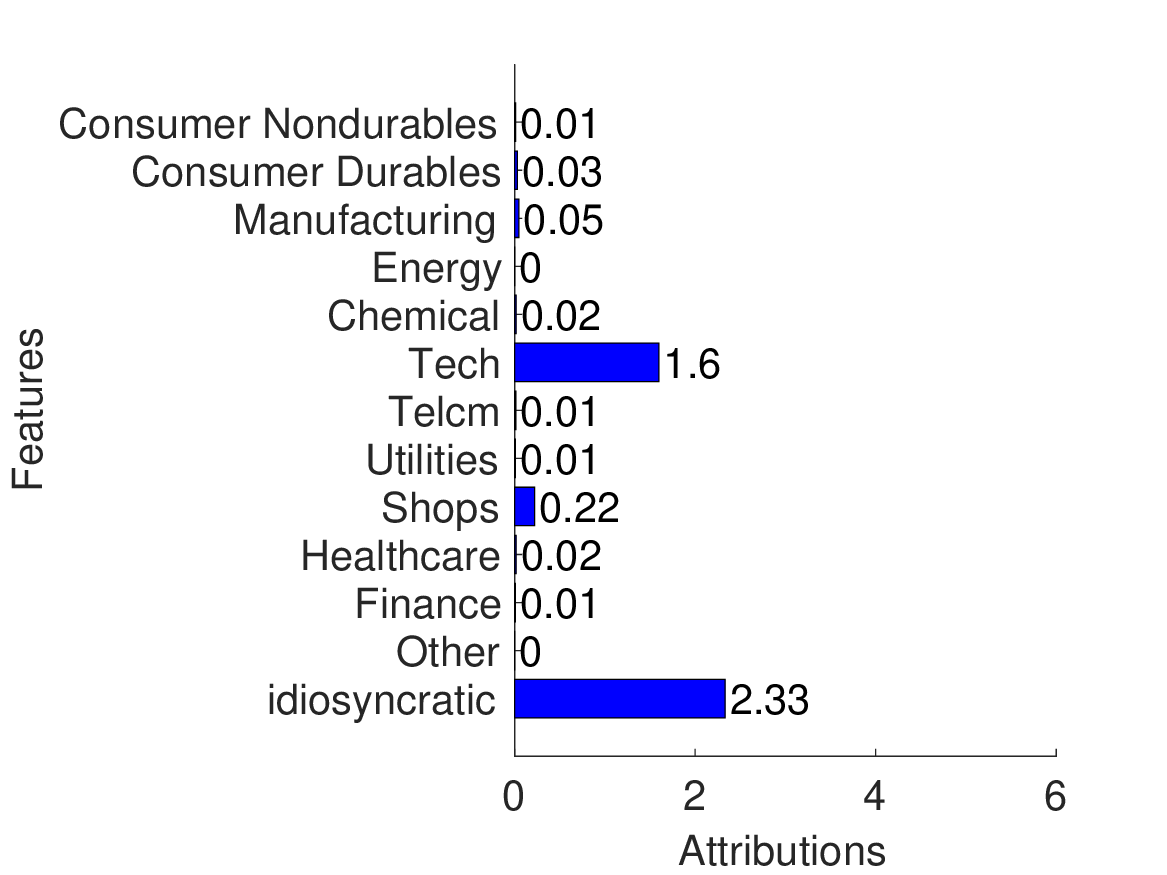}
%     \caption{BShap ($\times 100$) of Google using CVaR($5\%$) in 2023}
%     \label{fig:GOOG_CVaR}
% \end{figure}

\begin{figure}[ht]
\centering
\begin{minipage}{0.5\textwidth}
  \includegraphics[scale=0.35]{figures/TSLA_CVaR.eps}
    \caption{BShap ($\text{CVaR}_{5\%}$) of Tesla in 2023}
    \label{fig:TSLA_CVaR}
\end{minipage}%
\begin{minipage}{0.5\textwidth}
  \includegraphics[scale=0.35]{figures/GOOG_CVaR.eps}
    \caption{BShap ($\text{CVaR}_{5\%}$) of Google in 2023}
    \label{fig:GOOG_CVaR}
\end{minipage}
\end{figure}

% \begin{figure}[h]
%     \centering
%     \includegraphics[scale=0.22]{figures/Tesla_XML}
%     \caption{Risk attributions for Tesla from 2019 to 2023.}
%     \label{fig:Tesla_XML}
% \end{figure}

% The Shapley value provides a framework for the attribution of risk. We define the characteristic function as
% \begin{align}
%     v(S) = 
%         \varrho(f(\*X_S; \*X'_{M \backslash S})+\epsilon \cdot 1_{m+1 \subseteq S})
% \end{align}
% for RAM and
% \begin{align}
%     v(S) = \widehat{\varrho}(\{\widehat{f}((\*x_S)_j; (\*x'_{M \backslash S})_j) + \widehat{\epsilon}_j \cdot 1_{m+1 \subseteq S} \}_{j=1}^n)
% \end{align}

\subsection{Option pricing}

We provide an example of option pricing. A call option is a contract under which a buyer has the right, but not the obligation, to purchase an underlying asset at a strike price of $K$ with a maturity period of $\tau$. Let us assume that the present time is $t$, and that, at time $t+\tau$, a stock price greater than the strike price will exercise the option and the payoff will be $S_{t+\tau}-K$; otherwise, the option will not be exercised and the payoff will be 0. In summary, the value of the call option at time $t+\tau$ is equal to $C(S_{t+\tau},K,\tau) = (S_{t+\tau}-K)^+$. Based on a couple of assumptions, \cite{black1973pricing,merton1973theory} developed a pricing formula $C(S_t,K,\tau,\sigma,r)$, where $\sigma$ represents volatility and $r$ represents the constant risk-free interest rate. In this demonstration, the Black-Scholes-Merton (BSM) formula is used for the call option price for demonstration since it is widely accepted in industries. However, ML models may be used for potential improvements. 

We collected stock prices and option data from Wharton Research Data Services in 2008. There are 253 trading days. The London Interbank Offered Rate (LIBOR) is used to represent risk-free interest rates. The LIBOR served as the benchmark interest rate at which major global banks lent to one another in the international interbank market for short-term loans. This key benchmark interest rate served as an indication of borrowing costs between banks throughout the world. Daily volatility is measured by the VIX index \citep{exchange2009cboe}, which is calculated by option data.

Suppose we wish to know the risk associated with the purchase of a call option. We are particularly concerned about whether the option price will change significantly tomorrow. The S$\&$P 500 index data for 2008 is used. As of the end date ($t=T$) of 2008, the stock price $S_T$ is approximately 890. At the time $T$, we are interested in a call option with a strike price of $K=800$ and an expiration length of one month. We need to evaluate how the option prices will change on the next date. Taking a look back at the year 2008, we can see how the market conditions changed in just one day. We use $\delta_i = \frac{S_{i+1}}{S_i}$ to denote the percentage change in the stock, $\sigma_i$ to denote volatility, and $r_i$ to denote risk-free interest rates at $i$th day. Since the stock price is not stationary over time, we consider $S_T \delta_i$ as the change of the potential stock price in one day for each date. Thus, we consider all possible scenarios in the last year using $\{ C(S_T \delta_i, K=800, T = \frac{30}{365}, \sigma _{i+1}, r_{i+1}) \}_{i=1}^{n-1}$. In other words, we consider the possibility that the conditions of the market tomorrow will be similar to those of $i+1$ day, which would result in a change in stock prices of $\delta_i-1$ percent, a change in volatility of $\sigma_{i+1}$, and a change in the risk-free interest rate of $r_{i+1}$. We use $\widehat{\varrho}(\cdot) = \widehat{\text{STD}}(\cdot)$ as an example of a demonstration. In this period, $\widehat{\text{STD}}(C(S_{T} \delta_i, K=800, T = \frac{30}{365}, \sigma_{i+1}, r_{i+1})) = 19.6$, which indicates that the price of the option fluctuates greatly. BShap (RAM) is then used to decompose the risk. To simplify interpretation, we use the log of stock prices, volatility, and interest rates as inputs, since otherwise they are not on the same scale. There is a natural baseline point here, which would be $\*x'=(S_T,\sigma_T,r_T)$, which is the current market condition. When the baseline value $\*x'$ is a constant, $\widehat{\text{STD}}(\*x') = 0$. According to BShap (RAM), log stock price, log volatility, and log interest rate have risk attributions of 13.8, 5.7, and 0.1, respectively. The results are in accordance with our expectations. A European call option is likely to be affected most by the stock price, followed by volatility, and the interest rate has the least impact since it is relatively stable in comparison to the other factors. Based on risk attributions, one may also want to consider hedging the stock price against the risks. Moreover, one may hedge the risks associated with volatility.

\section{Conclusion} \label{sec:conclusion}

This study examines risk attribution methods by extending the Shapley value framework. We demonstrate that attributions of risk to general complex models, including machine learning models, can be fairly allocated through analysis and empirical examples. This has resulted in a better understanding of risk attributions in practice. 

\section*{Disclosure of interest}
The authors report that there are no competing interests to declare.

\section*{Funding}
No funding was received.

\bibliography{ref.bib}
\bibliographystyle{plainnat}

\appendices 

\section{Proof}\label{sec:proof}

% \begin{proof}[Proof of Proposition~\ref{prop:RAM_GAM_equiv}]

% For the first case, for each sample, we have
% \begin{align*}
%     \text{BS}_i(\*x) &= \sum_{S \subseteq M \backslash i} \frac{|S|! (|M|-|S|-i)!}{M!} (f(\*x_{S \cup i},\*x_{M \backslash \{S \cup i\}}) - f(\*x_S,\*x_{M \backslash S})).
% \end{align*}
% Then the MGA is
% \begin{align*}
%     \text{GBS}_i(\*X,\mathbb{E}[\cdot]) = \mathbb{E}[\text{BS}_i(\*X)] = \sum_{S \subseteq M \backslash i} \frac{|S|! (|M|-|S|-i)!}{M!} (\mathbb{E}[ f(\*X_{S \cup i},\*X_{M \backslash \{S \cup i\}})] - \mathbb{E}[ f(\*X_S,\*X_{M \backslash S}) ]) = \text{RBS}_i(\*X,\mathbb{E}[\cdot]).
% \end{align*}
% The second case is similar. 
    
% \end{proof}

% \begin{proof}[Proof of Proposition~\ref{prop:BShap_SM}]

%      For Symmetric Monotonicity (RAM) with $i=j$, WLOG, suppose $v$ is monotonic with respect to $j$ over $i$ for $w$. Then 
%     \begin{align*}
%         \text{BS}_i(w) - \text{BS}_i(v) = \sum_{S \subseteq M \backslash i} \frac{|S|! (|M|-|S|-1)!}{|M|!} (w(S \cup i) - w(S) - v(S \cup i) + v(S)).
%     \end{align*}
%     By Definition of Symmetric Monotonicity (RAM), we conclude. 

% \end{proof}

\begin{proof}[Proof of Theorem~\ref{thm:BShap_RAM_basic_axioms}]
    The Completeness (RAM) directly follows from the Completeness (SH) with the choice of characteristic function $v_R$. Under assumptions of Dummy (RAM) and Symmetry (RAM), the assumptions of the characteristic functions for Dummy (SH) and Symmetry (SH) are satisfied, and therefore the conclusions hold. 
\end{proof}

\begin{proof}[Proof of Proposition~\ref{prop:Sh_sub_addivitiy}]
    We calculate that 
    \begin{align*}
        \text{BS}_i &= \sum_{S \subseteq M \backslash i} \frac{|S|! (|M|-|S|-1)!}{|M|!} (v_R(S \cup i) - v_R(S)) \\
        &= \sum_{S \subseteq M \backslash i} \frac{|S|! (|M|-|S|-1)!}{|M|!} \left( \varrho \left( \sum_{j \in S} X_j + X_i \right) - \varrho \left( \sum_{j \in S} X_j \right) \right) \\
        &\leq \sum_{S \subseteq M \backslash i} \frac{|S|! (|M|-|S|-1)!}{|M|!} \varrho(X_i) = \varrho(X_i).
    \end{align*}

\end{proof}

\end{document}